\begin{document}
\title{Exact Distributed Sampling}
%
%\titlerunning{Abbreviated paper title}
% If the paper title is too long for the running head, you can set
% an abbreviated paper title here
%
\author{Sriram V. Pemmaraju \and
Joshua Z. Sobel}
\authorrunning{S. Pemmaraju and J. Sobel}
% First names are abbreviated in the running head.
% If there are more than two authors, 'et al.' is used.
%
\institute{Department of Computer Science, University of Iowa, USA
\email{\{sriram-pemmaraju,joshua-sobel\}@uiowa.edu}}
\maketitle              % typeset the header of the contribution
\newcommand{\josh}[1]{\textcolor{red}{#1}}
\newcommand{\sriram}[1]{\textcolor{blue}{#1}}
\newcommand{\cons}{\mathcal{C}}
\newcommand{\conss}{\mathcal{S}}
\newcommand{\local}{\textsc{Local}\xspace}
\newcommand{\congest}{\textsc{Congest}\xspace}
\newcommand{\congestedclique}{\textsc{CongestedClique}\xspace}

\begin{abstract}
Fast distributed algorithms that output a feasible solution for constraint satisfaction problems, such as maximal independent sets, have been heavily studied. There has been much less research on distributed \textit{sampling} problems, where one wants to sample from a distribution over all feasible solutions (e.g., uniformly sampling a feasible solution).  
Recent work (Feng, Sun, Yin PODC 2017; Fischer and Ghaffari DISC 2018; Feng, Hayes, and Yin arXiv 2018) has shown that for some constraint satisfaction problems there are distributed Markov chains that mix in $O(\log n)$ rounds in the classical \local model of distributed computation.
However, these methods return samples from a distribution close to the desired distribution, but with some small amount of error.  In this paper, we focus on the problem of \textit{exact} distributed sampling. Our main contribution is to show that these distributed Markov chains in tandem with techniques from the sequential setting, namely \textit{coupling from the past} and \textit{bounding chains}, can be used to
design $O(\log n)$-round \local model exact sampling algorithms for a class of weighted local constraint satisfaction problems. 
This general result leads to $O(\log n)$-round exact sampling algorithms that use small messages (i.e., run in the \congest model) and polynomial-time
local computation for some important special cases, such as sampling weighted independent sets (aka the \textit{hardcore} model) and weighted dominating sets.

\keywords{Distributed Sampling, Bounding Chains, Perfect Sampling, Coupling from the Past}
\end{abstract}

\section{Introduction}
There is a vast body of literature on the distributed complexity of solving local constraint satisfaction problems (CSPs) on graphs \cite{AwerbuchLubyGoldbergPlotkinFOCS1989,LinialSICOMP1992,NaorStockmeyerSTOC1993,KuhnMoscibrodaWattenhoferPODC2004,KuhnMoscibrodaWattenhoferJACM2016,BEPSJACM2016,ChangKopelowitzPettieSICOM2019}. 
Here ``local'' refers to the fact that the constraints span vertices with a constant 
diameter in the underlying graph.
These local CSPs include classic ``symmetry breaking'' problems such as maximal independent sets, $(\Delta+1)$-colorings, and maximal matchings \cite{LubySTOC1985,AlonBabaiItaiJAlg1986,GhaffariSODA2016,BEPSJACM2016}. A distributed algorithm solving one of these local CSPs is required to construct some feasible solution of the local CSP.
In contrast, this paper focuses on the problem of \textit{sampling} a feasible solution of a local CSP.  
In the sampling problem the algorithm is required to output a solution sampled from the set of all feasible solutions of the CSP according to some desired probability distribution.
Clearly, the sampling problem is at least as hard as the construction problem because solving the sampling problem requires the construction of a feasible solution.
More precisely, we are interested in sampling solutions of local \textit{weighted} CSPs. Here ``weighted'' refers to an assignment of a weight to each feasible solution of the CSP with the stipulation that the solutions be sampled with probabilities proportional to the weights. When the weights are identical,
the sampling distribution is uniform.

Traditionally, in the sequential setting, sampling from a weighted local CSP on a graph involved running an ergodic Markov chain with a stationary distribution matching the desired distribution.  After running for a long enough time, the distribution of the current state of the chain becomes arbitrarily close to (within any $\epsilon$ in terms of \textit{total variation distance}) the desired distribution.  For a given $\epsilon$, the time required for this is known as the \textit{mixing time}.  After the mixing time is reached the current state of the chain is returned.
The two most simple examples of Markov chains that are used to sample from weighted local CSPs are the \textit{Metropolis-Hastings algorithm} and the \textit{Glauber dynamics} \cite{textbook}.  For some weighted local CSPs (e.g., proper colorings) with certain parameters, these chains have mixing times of $O(n\log \frac{n}{\epsilon})$, where $n$ is the number of nodes in the graph.

A \textit{distributed} Markov Chain for sampling from a weighted local CSP, the \textit{Local Metropolis} chain, was introduced by Feng, Sun, and Yin \cite{LocalMetropolis}.  This chain allows every vertex to simultaneously propose a new label, rather than a single selected vertex.  This chain is easily implemented in the \local model of distributed computing, taking a constant number of rounds for each step of the chain.  For certain weighted local CSPs, the chain can also be implemented in the \congest model, taking a constant number of rounds for each step of the chain.  Both Fischer and Ghaffari \cite{FischerGhaffariColoring} and Feng, Hayes, and Yin \cite{arxivImprovedColoring} showed that this chain could be improved, at least in the case of colorings\footnote{Fischer and Ghffari \cite{FischerGhaffariColoring} claim that their approach, where not every vertex is marked, has an $O(\log \frac{n}{\epsilon})$-round mixing time for a more general class of weighted CSPs than colorings, though the proof of this does not appear in the paper.}, by only updating a small fraction of marked nodes in each step instead of attempting to update every node at every step. 
For colorings, the Local Metropolis chain \cite{LocalMetropolis} has a mixing time of $O(\log \frac{n}{\epsilon})$ when the palette has at least $\alpha \Delta$ colors for any constant $\alpha > 2 + \sqrt{2}$. The improvement in \cite{FischerGhaffariColoring,arxivImprovedColoring} only requires $\alpha>2$, while achieving the same mixing time. Here, $n$ is the number of vertices and $\Delta$ is the maximum degree of the graph.
The point to note about these algorithms is that they return a state drawn from a distribution that approximates the desired distribution within a total variation distance of $\epsilon$. Furthermore, the bound on the mixing time grows as $\epsilon$ becomes smaller. The current paper focuses on exact distributed sampling, i.e., the setting where $\epsilon = 0$.

In the sequential setting, one elegant and well known method for sampling exactly from the stationary distribution of a Markov chain is \textit{coupling from the past (CFTP)} \cite{CFTP}, introduced by Propp and Wilson. In its original form, CFTP took exponential time in general, so its use was limited to Markov chains that had state spaces with special properties (e.g., the \textit{monotonicity} property).  Subsequently, Nelander and Häggström \cite{BoundingChain1} and Huber \cite{BoundingChain2} showed that in cases where the original CFTP algorithm may not be tractable, augmenting the algorithm with \textit{bounding chains} may still allow the algorithm to be used.
The main contribution of this paper is showing that it is possible to use CFTP and the bounding chain technique in the \local model.  We use CFTP and the bounding chain technique in conjunction with the Markov chains described in the previous paragraph, to sample \textit{exactly} from weighted local CSPs.  Our algorithms are fast and in some cases run in the \congest model. 
Our results are described in more detail in the next section.

\paragraph{Comparable Results.} While the papers \cite{LocalMetropolis,FischerGhaffariColoring,arxivImprovedColoring} focus on approximate distributed sampling, there are two recent papers on exact sampling of certain weighted local CSPs.  
Feng and Yin \cite{DistributedJVVPaper} use a seminal result of Jerrum, Valiant, and Vazirani \cite{JerrumValiantVaziraniTCS1986}, showing that for certain problems exact sampling reduces to approximate counting by using a rejection sampling procedure. They present a distributed implementation of that approach.  
Guo, Jerrum, and Liu \cite{LLLPaper} present a sampling version of the Lov\'{a}sz Local Lemma that can be applied in the distributed setting to exactly sample from some weighted local CSPs.

Our approach using CFTP and bounding chains differs significantly from both of these techniques.  All three techniques also differ in terms of the weighted local CSPs they are able to sample from.  However, our techniques lead to results for the hardcore model that improve upon the results of \cite{DistributedJVVPaper,LLLPaper}. A precise comparison appears further below.

\subsection{Main Results}
The main results of our paper can be summarized as follows. 
%We state our results precisely after the technical preliminaries (in Section \ref{}).
\begin{enumerate}
    \item We present a distributed Markov chain for sampling from weighted local CSPs
    and prove that it has the correct stationary distribution.  This chain is based on the \textit{Local Metropolis} chain from \cite{LocalMetropolis} and the coloring chains from \cite{FischerGhaffariColoring,arxivImprovedColoring}. We believe that this chain may be the generalization briefly alluded to by Fischer and Ghaffari in \cite{FischerGhaffariColoring}.%I think it is almost certainly the generalization they mention and not a big leap from what they present in their paper.  Also, we need to be using the same chain as them for our Ising model results to hold as they mention the Dobrushin condition as sufficient in their paper.
    \item We apply the CFTP with bounding chains approach to the above-mentioned distributed Markov chain.  %In particular, we follow the sequential approach of \cite{BoundingChain1}. 
    We then present a condition that guarantees termination of this algorithm in $O(\log n)$ rounds with high probability in the \local model.
    Thus, under a fairly general condition, we obtain an $O(\log n)$-round algorithm in the \local model for exact sampling from weighted local CSPs. 
    %that guarantees an algorithm for \textit{exact} sampling that takes $O(\log n)$ rounds with high probability in the \local model, 
    This result is an improvement by a factor of $n$ over the sequential setting running time of $O(n\log n)$, for a slightly different condition, given by \cite{BoundingChain1}. 
    %provided a similar condition.
    \item   
    We finally show that the general algorithm described above leads to $O(\log n)$-round, small-message  (i.e., \congest model), sampling algorithms for the weighted independent sets (aka hardcore model) problem and the weighted dominating sets problem.  For the hardcore model, we are able to sample within a constant multiple of the hardness threshold for this problem (see the end of this section). Our algorithms do not abuse the power of the \congest model and ensure that every local computation runs in polynomial time. 
    
    The hardcore model is governed by a parameter $\lambda > 0$, called the \textit{fugacity} of the model. Each independent set of size $x$ is assigned a weight of $\lambda^x$; therefore, the desired probability distribution assigns the same probability to all independent sets of the same size. Furthermore, when $\lambda < 1$, small independent sets are more likely than large independent sets.
    Our algorithm\footnote{Results on the hardcore often assume that  algorithms run on graphs of constant degree.  In this sense, we require $\lambda<\frac{1}{\Delta}$.  Furthermore, on graphs of bounded degree, our results can likely be slightly improved by using the LubyGlauber chain from \cite{LocalMetropolis}.} for the hardcore model requires $\lambda\leq\frac{\alpha}{\Delta}$, for any constant $\alpha<1$. This almost matches the condition $\lambda\leq\frac{\alpha}{\Delta-1}$ that \cite{BoundingChain1} gives for $O(n\log n)$ time in the sequential setting.  We also derive a similar condition for weighted dominating sets.
    
    Feng and Yin \cite{DistributedJVVPaper} also present results for exactly sampling from the hardcore model. Their algorithm is much slower than ours, taking $O(\log^3 n)$ rounds, and it also uses large messages and exponential-time local computations. However, their result holds for a wider range of the fugacity parameter, specifically when $\lambda < \frac{(\Delta-1)^{\Delta-1}}{(\Delta-2)^\Delta}$. Note that from \cite{LocalMetropolis}, sampling from the hardcore model is hard in the \local model for $\lambda > \frac{(\Delta-1)^{\Delta-1}}{(\Delta-2)^\Delta}$.  In the sequential setting, the same threshold is a barrier between polynomial and non-polynomial sampling, unless RP=NP \cite{SequentialLowerBound1,SequentialLowerBound2,SequentialLowerBound3,SequentialLowerBound4}, given the connection between approximate counting and sampling \cite{JerrumValiantVaziraniTCS1986,countingsampling1}.  Like us, Guo, Jerrum, and Liu \cite{LLLPaper} provide an $O(\log n)$-round w.h.p. \congest algorithm for exact sampling from the hardcore model using polynomial-time local computation; however, they require a smaller range of $\lambda$ than us, specifically $\lambda \leq \frac{1}{2\sqrt{e}\Delta-1}$.  
    
    Our algorithm improves over both \cite{DistributedJVVPaper,LLLPaper}, as every vertex always outputs its label in the exact sample.  The algorithm from \cite{DistributedJVVPaper} succeeds with high probability and returns an exact sample conditioned on success; however, it may fail and failures cannot be detected by every vertex locally.  On the other hand, the algorithm from \cite{LLLPaper} always succeeds in a random amount of time like ours; however, a vertex cannot locally determine when its portion of the output is finalized.

\end{enumerate}

\section{Technical Preliminaries}
\subsection{Sampling Weighted Local CSPs}
A weighted CSP on a graph $G = (V,E)$ consists of a set $L$ of vertex labels and a collection $\conss \subseteq 2^V$ of \textit{constraint sets}, in addition to a \textit{constraint} $C_R$ for each constraint set $R$ .  
A labeling assigns an element in $L$ to each vertex in $V$; thus $L^V$ is the set of all labelings of $G$. 
For a labeling $\ell \in L^V$, $v \in V$, and $R \subseteq V$, we use $\ell(v)$ to denote the label that $\ell$ assigns to vertex $v$ and $\ell\restriction_R$ to denote the restriction of $\ell$ to $R$.
Each constraint $C_R$ maps the set of all restricted labelings $\ell\restriction_{R}$ to the non-negative reals.
The \textit{weight} of labeling $\ell$ is  
$\prod_{R\in \conss} C_R(\ell\restriction_R)$.
We call a labeling \textit{valid} if it has weight greater than zero.  
A \textit{CSP} is typically defined as the problem of finding an arbitrary valid labeling. The \textit{weighted CSP} is to sample valid labelings, where the probability of choosing a labeling is proportional to its weight.
This probability distribution of labelings will be referred to as $\pi$.
%For example, an independent set can be expressed as CSP with $L = \{0, 1\}$ and constraint set $\cons$ consisting of constraints corresponding to edges of $G$. For each edge $e = \{u, v\}$, the constraint $C_e$ maps the labelings $(0,1)$, $(1, 0)$, and $(0, 0)$ of $e$ to 1 and the labeling $(1, 1)$ to 0.  

For notational convenience, we will assume that $\conss$ does not contain any singleton sets. Instead, we will assume that for each vertex $v$ there is a separate unary constraint $b_v: L\to \mathbb{R}^{\geq 0}$, that maps a label assigned to $v$ to a non-negative real number.  Note that this is without loss of generality because for any $v \in V$, we can set $b_v(x) = 1$ for all $x \in L$. With this additional notation, the weight of a labeling $\ell$ can be written as
\begin{equation}
\label{eqn:weight}
\prod_v b_v(\ell(v)) \cdot \prod_{R\in \conss} C_R(\ell\restriction_R).
\end{equation}
%A labeling of the graph is an element of $S^V$.  For $T\subseteq V$, a constraint $C_T$ maps a labeling $f\restriction_T$ to the non-negative real numbers.  The weight of labeling $f$ is equal to $\prod_v b_v(f(v))\prod_{C_T\in C}C_T(f\restriction_T)$.  We call a labeling valid if it has weight greater than zero.  The objective is to sample valid labelings, where the probability that a labeling is chosen is proportional to its weight.  
When there is a constant $k$ such that every constraint set has a diameter in the graph $G$ bounded by  $k$, then the weighted CSP is called a \textit{weighted local CSP}. Note that the diameter here refers to the distance in $G$, not the distance in the subgraph of $G$ induced by $R$.

We consider three examples of weighted local CSPs in this paper:  weighted independent sets (the hardcore model), weighted dominating sets, and (briefly) the Ising model.
\begin{itemize}
    \item \textit{Weighted independent sets} are a weighted local CSP taking a parameter $\lambda>0$ commonly called the \textit{fugacity}. Here, the set of labels $L = \{0, 1\}$, the vertices labeled $1$ form the independent set. Each unary constraint $b_v$ is the function $b_v(0) = 1, b_v(1)=\lambda$. The collection $\conss$ of constraint sets is $E$, the set of all edges in the graph. For every edge $e = \{u,v\}$, the constraint $C_e(\ell(u), \ell(v)) = 0$ if $\ell(u) = \ell(v) = 1$; otherwise, $C_e(\ell(u), \ell(v)) = 1$.
    $C_e$ is simply ensuring that the valid labelings are independent sets of $G$. Here an independent set of size $x$ has weight $\lambda^x$. Thus all independent sets of the same size have uniform probability; however, for $\lambda<1$ small independent sets have a higher probability than large independent sets.
    \item \textit{Weighted dominating sets} are a weighted local CSP very similar to weighted independent sets. This CSP also takes a parameter $\lambda>0$ and has the same set of labels and unary constraints. The collection $\conss$ of constraint sets consists of inclusive neighborhoods $N_v = Nbr^+(v)$ for each vertex $v$. The constraint $C_{n_v}$ maps to $1$ if at least one of the vertices in the inclusive neighborhood has the label $1$ and $0$ otherwise. 
    \item The \textit{Ising model}, in a simple form as given by \cite{LocalMetropolis}, is another weighted local CSP.  Here there are two possible labels, $\{-1,1\}$, for each vertex, and $b_v(-1)=b_v(1)=1$.  $\beta>0$ is provided with the model.  Each pair of adjacent vertices has a constraint that maps to $\beta$ if the vertices have the same label and $1$ otherwise.  The weight given to a labeling is $\beta^a$, where $a$ is the number of edges that have both of their vertices assigned the same label.
\end{itemize}

%\subsection{Our Theorems}
%\label{section:theorems}

\subsection{Related Work on lower bounds}
%\subsubsection{Lower Bounds}
The discussion thus far has been on upper bounds. However, there are interesting lower bounds for sampling from the hardcore model in the \local model \cite{LocalMetropolis,LLLPaper}. 
Feng, Sun, and Yin show an $\Omega(Diam)$ round lower bound on $n$-vertex graphs with diameter $Diam = \Omega(n^{1/11})$, when $\lambda > \frac{{(\Delta-1)^{\Delta-1}}}{(\Delta-2)^\Delta}$.  Guo, Jerrum, and Liu \cite{LLLPaper} show a more general $\Omega(\log n)$-round lower bound.  Similar bounds exist or can be derived for other weighted local CSPs, see the two cited papers.

\section{Distributed Markov Chain}
\label{section:metropolis}
In this section, we present a distributed Markov chain for sampling from weighted local CSPs (see Algorithm \ref{markov-alg}). 
In subsequent sections we show that it is possible to use this chain to exactly sample solutions of weighted local CSPs efficiently in the \local model (and in the \congest model in some cases).
Our Markov chain is a simple modification of the \textsc{LocalMetropolis} Markov chain, given in \cite{LocalMetropolis}.  
The \textsc{LocalMetropolis} chain contains a \textit{propose} step in which each vertex $v$ independently proposes a label $\sigma_v$ in $L$ with probability proportional to 
$b_v(\sigma_v)$. This is followed by a probabilistic \textit{local filter} that is applied to each constraint set. We describe this in more detail in the next paragraph.
If all the constraint sets containing a vertex $v$ pass the local filter, then $v$ adopts $\sigma_v$; otherwise it retains its old label.
%:Each edgee∈Eflips a biased coin independently, with the probability ofHEADS being ̃Ae(σu, σv) ̃Ae(Xu, σv) ̃Ae(σu, Xv),
We modify this Markov chain by first marking each vertex independently with a fixed probability $p$ and then allowing only the marked vertices to be active in each step.
This idea -- of randomly sampling vertices which will be active -- is a standard idea in randomized distributed algorithms (for example, Luby's algorithm \cite{AlonBabaiItaiJAlg1986,LubySTOC1985}), but more to the point it was used in \cite{FischerGhaffariColoring,arxivImprovedColoring} to speed up their distributed Markov chain for coloring.  In particular, \cite{FischerGhaffariColoring,arxivImprovedColoring} both present the chain resulting from augmenting the \textsc{LocalMetropolis} chain for colorings to have a set of marked vertices.
Furthermore, we infer that the Markov chain we present is the generalization mentioned by \cite{FischerGhaffariColoring}.
%The algorithm for taking one step of the chain is algorithm \ref{markov-alg}.  
%Note that this general algorithm is not given explicitly in those papers, however we infer that it is likely the algorithm the authors of \cite{fischer_et_al:LIPIcs:2018:9815} refer to when they discuss generalizing their results.  
%\sriram{This seems to downplay this contribution a bit. Also, mention that the proof that $\pi$ is the stationary distribution is new and extends the proof of Theorem 4.1 in \cite{LocalMetropolis}, which only applies to MRFs and more importantly, only to the situation in which all vertices are active.}

We now describe the local filtering step.
%The algorithm requires one more definition.  
%Let $v$ be a vertex.  
During each step of the chain, if a vertex $v$ is marked then it has a current label $X_v$ and a proposed label $\sigma_v$; otherwise, it only has a current label $X_v$.  For each constraint set $R$, we now consider a collection $\mathcal{L}(R)$ of labelings of $R$.  In particular, we call $ (\ell(v_1), \ell(v_2),\ldots,\ell(v_{|R|}))$ a \textit{potential} labeling of $R$, if each label $\ell(v_i)$ is chosen from either $X_{v_i}$ or $\sigma_{v_i}$, and as long as at least one of the $|R|$ choices was made from $\sigma$.  We now let $\mathcal{L}(R)$ be the collection of all potential labelings, with the note that it can contain the same element multiple times if there are multiple valid ways of choosing it.  To be technically correct, each potential labeling should be represented as a binary vector, however we bend notation and treat a potential labeling as a labeling of the constraint set.  Each constraint set $R$ passes the local filter with probability 
\begin{equation}
\label{eqn:localfilter}
\prod_{\ell\in \mathcal{L}(R)}\frac{C_R(\ell)}{C^*_R}.
\end{equation}  
Here, $C^*_R$ refers to the maximum value that the constraint can take over its entire domain $L^R$.  $C^*_R$ can be assumed to be nonzero; otherwise, every labeling would be invalid.
%\sriram{Is the max in $\max C_T$ over $\mathcal{L}(T)$ or does it also include $(X_{v_1}, X_{v_2}, \ldots, X_{v_|T|})$?}  \josh{The max is over all labelings, not even just valid potential labelings.  It is just to normalize the constraint to be a probability.}
%\sriram{Ok, thanks. Then, is there is ``division by 0'' problem with this expression?}  \josh{That can't happen because if a constraint had max of 0 every labeling would be invalid, it would be a degenerate example}
%\sriram{Ok, got it, But, it seems like this would need to be explicitly forbidden, at least mentioned in the text, if not in the algo description.}\josh{OK}

%\sriram{In this paragraph, we need to explain this probability a bit more, maybe give an example, and provide some intuition for why the filtering step uses this probability.}
\begin{figure}
    \centering
    \includegraphics[width=\textwidth]{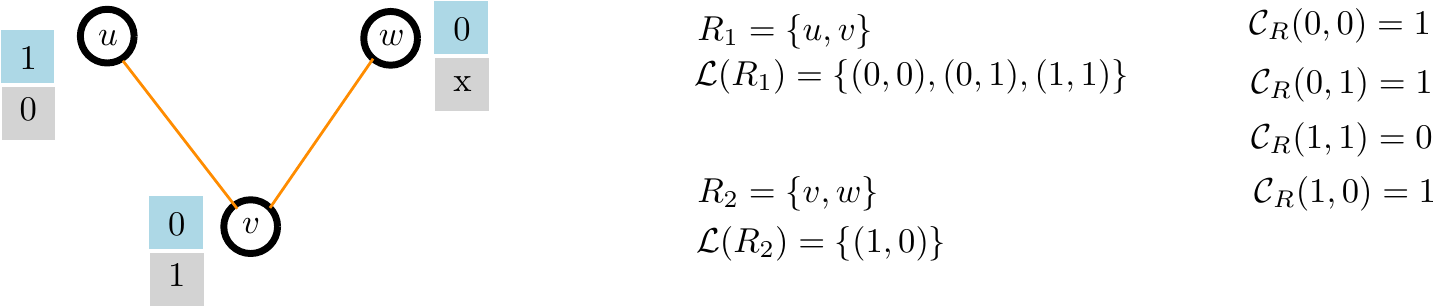}
    \caption{Example illustrating the computation of local filter probabilities for the weighted independent set problem.}
    \label{fig:localFilterExample}
\end{figure}

Figure \ref{fig:localFilterExample} shows an example of how to apply the local filter to the weighted independent set problem.
The figure shows a subgraph with 3 vertices and alongside each vertex we show its current label (top) and proposed label (bottom). Vertex $w$ has ``x'' as its proposed label because it is not marked active in this iteration.
Vertex $v$ participates in two constraint sets $R_1$ and $R_2$ corresponding to the two edges incident on it. The potential labelings $\mathcal{L}(R_1)$ and $\mathcal{L}(R_2)$ of each constraint set are shown. Each tuple in $\mathcal{L}(R_1)$ (respectively, $\mathcal{L}(R_2)$) shows $u$'s label (respectively, $v$'s label) followed by $v's$ label (respectively, $w$'s label).
Note that tuple $(1, 0)$ does not appear in $\mathcal{L}(R_1)$ because that would correspond to to both labels being chosen from the current labels.
Finally, the values of the constraint function are shown on the right. In this case we use the same constraint function for every constraint set. Note that the expression in (\ref{eqn:localfilter}) evaluates to 0 for $R_1$ indicating that $R_1$ does not pass the local filter. For $R_2$ the local filter evaluates to 1, indicating that $R_2$ does pass the local filter. Since not all constraint sets of $v$ pass the local filter, $v$ does not update its label.

\begin{algorithm}
\caption{}
\label{markov-alg}
\begin{algorithmic}
\Require each vertex $v$ initially has label $X_v$
%\Comment{$p$ is a constant in $(0,1]$}
\State each $v\in V$ is marked active with probability $p$
\For{each active vertex $v\in V$} 
    \Comment{Propose step}
    \State $v$ chooses $\sigma_v\in L$ with probability proportional to $b_v(\sigma_v)$
\EndFor
\For{each constraint set $R \in \conss$}
    \Comment{Local Filter step}
    \State $R$ passes the local filter with probability $\prod_{\ell\in \mathcal{L}(R)}\frac{C_R(\ell \restriction R)}{C^*_R}$
\EndFor
\For{each active vertex $v\in V$} 
    \Comment{Finalize labels}
    \If{all constraint sets in $\conss$ containing $v$ pass their checks}
        \State $X_v=\sigma_v$
    \EndIf
\EndFor
\end{algorithmic}
\end{algorithm}

We next prove that the Markov chain given by Algorithm \ref{markov-alg} is ergodic (aperiodic and irreducible) with $\pi$ as its stationary distribution under a mild condition; for any two valid states $X$ and $Y$, we have a sequence of valid states $X=Z_1,...,Z_n=Y$, where adjacent states differ at only a single vertex.  Recall that $\pi$ is the distribution over labelings of $G$ in which each labeling $\ell$ has probability proportional to its weight, expression (\ref{eqn:weight}).  This proof is based on and an extension of the proof from \cite{LocalMetropolis}.
The chain is clearly aperiodic since every state can transition to itself.  Furthermore, the condition mentioned above ensures irreducibility.  It remains to show that $\pi$ is the stationary distribution.  A standard way of showing that a distribution $\pi$ is stationary for a Markov chain with transition matrix $M$ is to prove the \textit{detailed balance equations},

$$\pi(X) \cdot M[X, Y]= \pi(Y) \cdot M[Y, X]$$ for all states of the chain $X, Y$.
As noted in \cite{LocalMetropolis}, a slightly stronger condition is needed for the chain to have the correct limit distribution if it starts in a state with weight $0$.  This point will not concern us, as we will view the chain as only being defined over valid states.  

\begin{theorem}
\label{detailed-balance}
Let $M$ be the transition matrix of the Markov chain defined by Algorithm \ref{markov-alg}.
For all states $X, Y$ we have $\pi(X) \cdot M[X, Y] = \pi(Y) \cdot M[Y, X]$.
\end{theorem}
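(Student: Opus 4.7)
The plan is to establish detailed balance by a term-by-term pairing of ``trajectories'' in the $X\to Y$ direction with trajectories in the $Y\to X$ direction. Let $D = \{v \in V : X(v) \neq Y(v)\}$. In any execution that transforms $X$ into $Y$, every $v \in D$ must be marked active and must propose $\sigma_v = Y(v)$. I therefore expand $M[X,Y]$ as a sum over triples $(A,\sigma,F)$ where $A\supseteq D$ is the marked set, $\sigma$ is the proposal vector with $\sigma\restriction_D = Y\restriction_D$, and $F\in\{\text{pass},\text{fail}\}^{\mathcal{S}}$ is the filter outcome. To each such triple I associate the triple $(A,\sigma',F)$ in the reverse direction, where $\sigma'_v = X(v)$ for $v\in D$ and $\sigma'_v = \sigma_v$ for $v\in A\setminus D$.

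Two structural observations make this pairing work. First, the update set $U = \{v\in A : \text{every } R\ni v \text{ passes in } F\}$ is a function of $(A,F)$ alone, so $(X,\sigma,F)$ produces $Y$ iff $(Y,\sigma',F)$ produces $X$; the set $F^*$ of filter outcomes compatible with the transition is thus identical in both directions. Second, for any constraint set $R$, the multiset $\mathcal{T}(R)$ of potential labelings obtained by choosing either $X(v)$ or $\sigma_v$ at each $v\in R$ coincides with the analogous multiset under $(Y,\sigma')$: at each $v$ the unordered pair of candidate labels is $\{X(v),Y(v)\}$ when $v\in D$, $\{X(v),\sigma_v\}$ when $v\in A\setminus D$, and $\{X(v)\}$ when $v\notin A$ (using $X(v)=Y(v)$ outside $D$, since $D\subseteq A$). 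Consequently $\mathcal{L}_X(R)$ and $\mathcal{L}_Y(R)$ differ only by omitting $X\restriction_R$ versus $Y\restriction_R$, which gives
\[
\frac{f_R^{X,\sigma}}{f_R^{Y,\sigma'}} \;=\; \frac{C_R(Y\restriction_R)}{C_R(X\restriction_R)}.
\]

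With these ingredients in hand, I will establish the term-by-term identity
\[
\pi(X)\,P(\sigma\mid A)\prod_R P_R(F_R\mid X,\sigma) \;=\; \pi(Y)\,P(\sigma'\mid A)\prod_R P_R(F_R\mid Y,\sigma')
\]
for every $(A,\sigma,F)$ with $F\in F^*$. The ratio $P(\sigma\mid A)/P(\sigma'\mid A)$ telescopes to $\prod_{v\in D} b_v(Y(v))/b_v(X(v))$ by the definition of the propose step. For the filter factors, the crucial point is that any $R$ with $R\cap D\neq\emptyset$ is forced to pass (because some $v\in D\cap R$ must update), so for such $R$ the ratio is exactly $C_R(Y\restriction_R)/C_R(X\restriction_R)$; for $R$ with $R\cap D=\emptyset$ we have $X\restriction_R=Y\restriction_R$ and $\mathcal{L}_X(R)=\mathcal{L}_Y(R)$, so both the pass and fail probabilities coincide and the ratio is $1$. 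Multiplying these, the ratio of the two sides equals $\pi(Y)/\pi(X)$, and summing the term-wise identity over $(A,\sigma,F)$ yields $\pi(X)M[X,Y]=\pi(Y)M[Y,X]$.

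The main obstacle is making the argument genuinely term-by-term rather than merely in aggregate. If some $R$ with $R\cap D\neq\emptyset$ were permitted to fail, then the asymmetric $1 - f_R$ factors, whose ratio is \emph{not} $C_R(Y\restriction_R)/C_R(X\restriction_R)$, would appear and spoil the clean factorization. Verifying that the update-pattern condition defining $F^*$ pins down exactly the constraint sets whose filter probabilities differ between the two directions is the heart of the proof; the auxiliary check that $\mathcal{T}(R)$ is direction-invariant is what ultimately makes this work.
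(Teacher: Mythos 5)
Your proof is correct and takes essentially the same route as the paper's: the same bijection between (proposal, filter-outcome) tuples in the two directions (swap $X_v$ and $Y_v$ in the proposal on the disagreement set, keep the filter outcomes), the same term-by-term ratio identity, and the same cancellation in the filter probabilities. Your observation that the unordered candidate pair at each vertex is direction-invariant, so that $\mathcal{L}_X(R)$ and $\mathcal{L}_Y(R)$ coincide as multisets except for omitting $X\restriction_R$ versus $Y\restriction_R$, is a cleaner packaging of the paper's explicit ``almost bijection'' between potential labelings, but it is the same underlying argument.
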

\begin{proof}
Let $X \neq Y$ be two valid labelings, thus $X$ and $Y$ are also states in the Markov chain.
Let $X_v$ be the current label of vertex $v$ and $\sigma_v$ be the proposed label if the vertex is marked, otherwise let $\sigma_v=\sqcup$.  We also define the binary vector $\mathbf{I}$, where for each constraint set $T \in \mathcal{S}$, $\mathbf{I}_T = 1$ if the constraint set $T$ passes its check and $\mathbf{I}_T = 0$ otherwise.  If a vertex $v$ is contained in any constraint set $T$ such that $\mathbf{I}_T = 0$, we call $v$ \textit{restricted}.  Note that $(\sigma, \mathbf{I})$ defines a function from one state to another.  

Given a pair of states, $X$ and $Y$, there could be many tuples $(\sigma, \mathbf{I})$ that map $X$ to $Y$ and similarly many tuples $(\sigma', \mathbf{I}')$ that map $Y$ to $X$.
We now show a bijection between tuples mapping $X$ to $Y$ and tuples mapping $Y$ to $X$.  Let $(\sigma, \mathbf{I})$ be a tuple mapping $X$ to $Y$.  
From $(\sigma, \mathbf{I})$, we construct $(\sigma', \mathbf{I}')$, a function mapping $Y$ to $X$ as follows.
For every vertex where $Y_v\neq X_v$, let $\sigma'_v = X_v$.  For all other vertices, let $\sigma'_v = \sigma_v$.  Finally, let $\mathbf{I}=\mathbf{I}'$.
To see that $(\sigma', \mathbf{I}')$ maps $Y$ to $X$, first note that for any vertex $v$ where $X_v\neq Y_v$, we must have $\sigma_v = Y_v$ and $v$ must be unrestricted in $\mathbf{I}$. This means that $v$ is unrestricted in $\mathbf{I}'$ and since $\sigma'_v = X_v$, we see that $(\sigma', \mathbf{I}')$ maps $Y_v$ to $X_v$.
For a vertex $v$ where $X_v=Y_v$, either (a) $\sigma_v = Y_v = X_v$, (b) $\sigma_v = \sqcup$, or (c) $v$ is restricted in $\mathbf{I}$.
In case (a), $\sigma'_v = Y_v = X_v$, in case (b) $\sigma'_v = \sqcup$, and in case (c) $v$ is restricted in $\mathbf{I}'$. In all three cases, $(\sigma', \mathbf{I}')$ maps $Y_v$ to $X_v$.
%We construct $(\sigma', \mathcal{C}')$, a function mapping $Y$ to $X$.  
It can be checked that this construction gives a bijection $(\sigma,\mathcal{C})\leftrightarrow (\sigma', \mathcal{C}')$ between tuples that map from $X$ to $Y$ and those that map from $Y$ to $X$.  

It is now sufficient to show 
\begin{equation}
\label{eqn:sufficient}
\frac{P(\sigma)P(\mathbf{I} | \sigma, X))}{P(\sigma')P(\mathbf{I}' | \sigma', Y))}=\frac{\pi(Y)}{\pi(X)}.
\end{equation}
This is because
$$\pi(X) \cdot M[X, Y] =  \pi(X) \sum P(\sigma)P(\mathbf{I} | \sigma, X)) = \pi(Y) \sum P(\sigma')P(\mathbf{I}' | \sigma', X)) = \pi(Y) M[Y, X],$$ 
where the middle equality follows from (\ref{eqn:sufficient}).

To show (\ref{eqn:sufficient}), we first observe that
$$\frac{P(\sigma)}{P(\sigma')}=\prod\limits_{v | X_v \neq Y_v}\frac{p \cdot b_v(Y_v)}{p \cdot b_v(X_v)} = \prod\limits_v\frac{b_v(Y_v)}{b_v(X_v)}.$$
We now consider $\frac{P(\mathbf{I} | \sigma, X))}{P(\mathbf{I}' | \sigma', Y))}$.  Since each constraint is passed or failed independently,
 $$\frac{P(\mathbf{I} | \sigma, X))}{P(\mathbf{I}' | \sigma', Y))}=\prod\limits_{T\in \conss}\frac{P(\mathbf{I}_T | \sigma, X))}{P(\mathbf{I}_T' | \sigma', Y))}.$$

There are now two cases to consider.
\begin{description}
\item[Case $\mathbf{I}_T = 0$.] In this case, every vertex in the constraint set $T$ is restricted.  For each of these vertices, we must have $X_v=Y_v$ and also $\sigma_v = \sigma'_v$.  
This means that the set of potential labelings $\mathcal{L}(T)$ used in the local filter probability (\ref{eqn:localfilter}) are identical for the chain in state $X$ and the chain in state $Y$.
Thus
$P(\mathbf{I}_T = 0 | \sigma, X) = P(\mathbf{I}_T' = 0 | \sigma', Y)$. 
Therefore, we can can rewrite the ratio $P(\mathbf{I}_T = 0 | \sigma, X)/P(\mathbf{I}_T' = 0 | \sigma', Y)$ as
$$\frac{P(\mathbf{I}_T = 0 | \sigma, X))}{P(\mathbf{I}_T' = 0 | \sigma', Y))} = 1 = \frac{C_T(Y_{v_1}, \ldots, Y_{v_{|T|}})}{C_T(X_{v_1}, \ldots, X_{v_{|T|}})}.$$

\item[Case $\mathbf{I}_T = 1$.]
%In this case, here there are three possibilities for each vertex.\\
%\begin{center}
%\begin{tabular}{c|c|c}
%    (a) & (b) & (c) \\\hline
%    $\sigma_v = \sigma'_v = \sqcup$ & $\sigma_v = \sigma'_v \neq \sqcup$ & $\sigma_v = Y_v$, \sigma_v' = X_v\\\hline
%    $X_v = Y_v$ & $X_v=Y_v$ & $X_v \not= Y_v$
%\end{tabular}
%\end{center}
%Now consider one valid potential labeling in the product that is used to determine if the constraint passes its check in the $X$ chain.  Let this be $(\mathcal{X}_1, ..., \mathcal{X}_m)$.  We can convert this to an equivalent (possibly not valid) potential labeling $(\mathcal{Y}_1, ..., \mathcal{Y}_m)$.  For every vertex $i$ of type $1$ we must have $\mathcal{X}_i=X_i=Y_i$ so we can let $\mathcal{Y}_i=Y_i$.  For every vertex $i$ of type $2$, if $\mathcal{X}_i=\sigma_i=\sigma'_i$ we can let $\mathcal{Y}_i=\sigma'_i$, otherwise $\mathcal{X}_i=X_i=Y_i$ and we let $\mathcal{Y}_i=Y_i$.  For every vertex of type $3$, if $\mathcal{X}_i=X_i=\sigma'_i$ we let $\mathcal{Y}_i=\sigma'_i$, otherwise if $\mathcal{X}_i=
%\sigma_i=Y_i$ we let $\mathcal{Y}_i=Y_i$.  Note that this almost provides a one to one mapping (in the sense of multisets) between  $\mathcal{X}(C_T)$ and $\mathcal{Y}(C_T)$.  $\mathcal{Y}$ is valid as long as not every assignment to $\mathcal{Y}_i$ was $Y_i$.  Note that there must be at least one vertex of type 3.  From the process described above, $\mathcal{Y}$ is invalid iff for every type $3$ vertex, $v$, $\mathcal{X}_{v}$ was chosen to be $\sigma_v$.
%In the second case, $C_e = 1$.  
We establish a mapping between potential labels in set $\mathcal{L}(T)$ for state $X$ and 
potential labels in set $\mathcal{L}(T)$ for state $Y$ that is almost a bijection.
Let $\mathcal{X} = (\mathcal{X}_{v_1}, \mathcal{X}_{v_2}, \ldots,\mathcal{X}_{v_{|T|}})$ be a potential labeling for state $X$.
%of the vertices covered by the constraint (relative to state $X$). 
%We will show that we can 'usually' 
From $\mathcal{X}$, we create a potential labeling
$\mathcal{Y} = (\mathcal{Y}_{v_1}, \mathcal{Y}_{v_2} \ldots,\mathcal{Y}_{v_{|T|}})$
for state $Y$, as follows.
%where $\mathcal{X}_i = \mathcal{Y}_i$. 
\begin{itemize}
\item If $\mathcal{X}_{v_i}$ was chosen from $X_{v_i}$ and $X_{v_i} = Y_{v_i}$ we can let $\mathcal{Y}_{v_i}$ be chosen from $Y_{v_i}$. 
\item If $\mathcal{X}_{v_i}$ was chosen from $X_{v_i}$ and $X_{v_i} \neq Y_{v_i}$ we can let $\mathcal{Y}_{v_i}$ be chosen from $\sigma'_i$.  
Note that in this case, $\sigma'_{v_i} = X_{v_i}$.
\item If $\mathcal{X}_{v_i}$ was chosen from $\sigma_{v_i}$ and $X_{v_i} = Y_{v_i}$ we can let $\mathcal{Y}_{v_i}$ be chosen from $\sigma'_{v_i}$.  
\item If $\mathcal{X}_{v_i}$ was chosen from $\sigma_{v_i}$ and $X_{v_i} \neq Y_{v_i}$ we can let $\mathcal{Y}_{v_i}$ be chosen from $Y_{v_i}$.  
\end{itemize}
Note that in all 4 cases, $\mathcal{X}_{v_i} = \mathcal{Y}_{v_i}$.

Note that $\mathcal{Y}$ is a potential labeling from $\mathcal{L}(T)$ as long as not every choice was from $Y_{v_i}$.  
There is exactly one choice for the labels in
$\mathcal{X} = (\mathcal{X}_{v_1}, \mathcal{X}_{v_2}, \ldots,\mathcal{X}_{v_{|T|}})$
that is mapped to
$\mathcal{Y} = (Y_{v_1}, Y_{v_2}, \ldots, Y_{v_{|T|}})$. In this choice, every $\mathcal{X}_{v_i}$ was chosen from $\sigma_{v_i}$ when $X_{v_i}\neq Y_{v_i}$ and $X_i$ when $X_{v_i}=Y_{v_i}$.  
We denote this label $\mathcal{X}'$ and see that $\mathcal{X}'=(Y_{v_1}, Y_{v_2}, \ldots, Y_{v_{|T|}})$.  Furthermore, since $X\neq Y$ at least one choice for $\mathcal{X}'$ was made from $\sigma$ so $\mathcal{X}'$ is a valid labeling.

Now recall that $P(\mathbf{I}_T = 1|\sigma, X) = \prod_{\ell\in \mathcal{L}(T)}\frac{C_T(\ell \restriction T)}{C^*_T}$ according to (\ref{eqn:localfilter}). Thus, in the ratio $P(\mathbf{I}_T = 1|\sigma, X)/P(\mathbf{I}'_T = 1|\sigma, Y)$, all terms in the numerator cancel out except for $C_T(Y_{v_1}, Y_{v_2}, \ldots, Y_{v_{|T|}})$.
%This gives a single potential labeling relative to $X$, $\mathcal{X}'$, with no corresponding valid labeling relative to $Y$.  Note that at every vertex $v$ we have $\mathcal{X}'_v = Y_v$.  
By a symmetric argument, there is a single term $C_T(X_{v_1}, X_{v_2}, \ldots, X_{v_{|T|}})$ left in the denominator.
\end{description}
Altogether, this shows  $$\frac{P(\mathbf{I}_T = 1 | \sigma, X))}{P(\mathbf{I}_T' = 1 | \sigma', Y))} = \frac{C_T(Y_{v_1}, ..., Y_{v_{|T|}})}{C_T(X_{v_1}, ..., X_{v_{|T|}})},$$ completing the proof.
\end{proof}

\section{Distributed Coupling From the Past}

\subsection{Coupling From the Past}
%\josh{I realize that thinking of CFTP as a single function along with random inputs and alternatively choosing functions from a family of functions are equivalent mathematically, however especially in the distributed setting I wonder if its better to stick to the latter representation (as in the original CFTP paper).  I think the distributed procedure we describe is more easily thought of as choosing a function than choosing a random real number in a distributed sense.  (I think this is especially true because we are not generating random numbers in some centralized way and passing them to the machines, but each machine is generating its part of $f_{-i}$ locally)}
\label{section:CFTP}
%\sriram{This section, where CFTP is introduced, will be impossible to understand for a reader who has not already seen CFTP.} 
\textit{Coupling from the past (CFTP)}, introduced in \cite{CFTP}, is a technique for sampling \textit{exactly} from the stationary distribution of an ergodic Markov chain.  A chapter covering coupling from the past can be found in \cite{textbook}.  Suppose the Markov chain is defined over a set of states $\Omega$ and has a transition matrix $M$.  
Following the notation in \cite{Vigoda}, 
let $f : \Omega \times \{0, 1\}^* \to \Omega$ be a function such that $P(f(X, r) = Y)= M[X, Y]$, when $r$ is a string chosen uniformly at random from $\{0, 1\}^*$. 
The function $f$ is called a \textit{random mapping} representation of the transition matrix $M$ and is known to always exist (see Proposition 1.5 in \cite{textbook}).

The function $f$ allows us to use a common source of randomness and evolve a chain beginning from each state $\sigma \in \Omega$ in a ``coupled'' manner. 
More precisely, for integer $t$, let $r_t \in \{0, 1\}^*$ be chosen uniformly at random.
Let $f_t : \Omega \to \Omega$ be the function $f_t(X) = f(X, r_t)$.  Now define a function $F^{t_2}_{t_1}: \Omega \to \Omega$ for integers $t_2 \geq t_1$ as

$$F^{t_2}_{t_1}(X) = (f_{t_2-1} \circ f_{t_2 - 2} \circ \cdots \circ f_{t_1})(X) = f_{t_2-1}(f_{t_2-2}(\ldots f_{t_1}(X))).$$
The function $F^{t_2}_{t_1}$ defines a coupled evolution of states from time $t_1$ to time $t_2$ with the property that $P(F^{t_2}_{t_1}(X) = Y) = M^{t_2-t_1}[X, Y]$.  For each $\sigma \in \Omega$, $F_0^{t}(\sigma)$ defines the state at time $t$ of a Markov chain beginning at $\sigma$ and with transition matrix $M$. The entire collection of Markov chains that evolve in this manner, one starting in each state of $\Omega$ and using common random strings, is called a \textit{grand coupling} \cite{textbook}.
%To sample in this manner we need a set of functions $F$ from $S\to S$ and a way of choosing $f\in F$, such that for any $x,y\in S$ we have $P(f(x)=y)=P_{x,y}$.  We can now define a sequence, $f_{-1},f_{-2},...$, of functions chosen in this manner from $F$.  We let $F_{-T} = f_{-1}(f_{-2}(...(f_{-t})))$.  

Now consider the function
$$F^{0}_{-T}(X) = (f_{-1} \circ f_{-2} \circ \cdots \circ f_{-T})(X) = f_{-1}(f_{-2}(\ldots f_{-T}(X))).$$
The insight of the CFTP technique is that if we can show that with probability $1$ there exists a $T$ such that $F^0_{-T}$ is a constant function (i.e., every state in $\Omega$ is mapped to the same state by $F^0_{-T}$), %\footnote{There exist random mappings such that $F^0_{-T}$ will never become a constant function.}, 
then the unique element in this image is drawn exactly from the stationary distribution $\pi$ of the Markov chain. We now provide some intuition for this possibly surprising claim. 
First note that if $F^0_{-T_1}$ maps all elements in $\Omega$ to $\omega\in\Omega$, then for any $T_2 > T_1$, the function $F^0_{-T_2}$ also maps all elements in $\Omega$ to $\omega$.  
Note that this is not true for the symmetric and invalid sampling technique, ``coupling to the future'', where we use $F_0^T$ instead of $F_{-T}^0$.
The critical difference here is that even if $F_0^T$ becomes constant for a sufficiently large $T$, we may not have $F_0^{T+1}=F_0^T$. 
We can therefore intuitively think of CFTP as computing $F^0_{-\infty}$. Since the stationary distribution is the limit distribution of an ergodic Markov chain, the unique element in the image of $F^0_{-\infty}$ is drawn exactly from the stationary distribution.

This insight suggests a natural algorithm for exact sampling. Starting with $T'=1$, compute $F^0_{-T'}(\Omega)$ and check if $|F^0_{-T'}(\Omega)| = 1$. If it is, we output the unique element (state) in $F^0_{-T'}(\Omega)$, otherwise we double the value of $T'$ and repeat.  Propp and Wilson point out that by choosing to double $T'$ at each iteration we only overshoot the smallest value of $T$ where $|F^0_{-T}(\Omega)| = 1$ by a constant multiple \cite{CFTP}.
To avoid biasing the samples it is critical that the same choice of functions $f_{-1}, f_{-2}, \ldots$ is used to compute $F^0_{-T'}$ each time $T'$ is increased and that the process is not stopped even if $T'$ grows large without $F^0_{-T'}$ becoming constant.

Given that $|\Omega|$ can be exponentially large relative to the size of the input, a problem with this algorithm is efficiently checking if $F^0_{-T'}$ is a constant function.  Another issue is that in general the final value of $T'$ needed may be large .  

\subsection{Bounding Chains}
\label{section:boundingChains}

It is easy to check if $F^0_{-T}$ is a constant function in the special case where the grand coupling has a property called \textit{monotonicity} \cite{CFTP}; however, we want to sample from weighted local CSPs where monotonicity may not exist.  
Häggström and Nelander \cite{BoundingChain1} and Huber \cite{BoundingChain2}
describe the \textit{bounding chain} technique for determining when coupling from the past gives a constant function.  The idea, is that for each vertex we compute a superset of the labels it could be assigned by $F_{-T}^0$ for an unknown valid input labeling.  If this label set is a singleton for every vertex, $F_{-T}^0$ is a constant function.  Note that the converse does not necessarily hold.  When the label set of every vertex is a singleton, we call $F^0_{-T}$ singular (acknowledging the slight abuse of notation, since singularity also depends on the method of computing the label sets).  We can implement CFTP by checking whether $F^0_{-T}$ is singular, instead of constant.  Note that a trivial choice for each label set is $L$, however, this will lead to an algorithm that never terminates.

To give a concrete practical example, consider sampling from the hardcore model in the sequential setting using bounding chains \cite{BoundingChain1}.  One step of the Markov chain is defined as follows.  A uniformly random vertex $v$ is selected and a coin is flipped with probability of heads $\frac{\lambda}{1+\lambda}$.  If the coin flip is tails then $v$ is removed from the independent set. If the coin flip is heads and if no neighbors of $v$ are in the current independent set, then $v$ joins the independent set. The random mapping representation of this Markov chain, $f$, can be defined in a natural way with the random string $r$ encoding a vertex and a coin flip, chosen with the correct distribution from all $(\text{vertex}, \text{coin flip})$ pairs. 

We now compute a superset of labels each vertex can be assigned by $F_{-T}^0$.  We use a recursive approach.  In the trivial base case, $T=0$, every vertex receives the label set $\{0,1\}$, indicating that in some states the vertex is outside the independent set and in some states the vertex is inside the independent set.  For $T\geq 1$, we first compute the label set for every vertex for $F_{-T+1}^0$.  If the operation of $f_{-T}$ is the removal of vertex $v$ from the independent set, we can be sure that regardless of the input state, $v$ has the label $0$.  Thus v's set of possible labels is set to $\{0\}$.  On the other hand, when $f_{-T}$ represents $v$ attempting to join the independent set, there are a few cases.  If every neighboring vertex of $v$ has the label set $\{0\}$ from the recursive call, then $v$ joins the independent set regardless of the input state, and is assigned the label set $\{1\}$.  If any neighboring vertex of $v$ has the label set $\{1\}$ from the recursive call, the join will fail regardless of the input state, and $v$ will get the label set $\{0\}$.  In the remaining case, whether the join succeeds may depend on the input state.  Thus we assign $v$ the label set $\{0,1\}$.  Every other vertex simply keeps its set from the recursive call.  It is shown in \cite{BoundingChain1}, that singularity occurs for this process for $T=O(n\log n)$ in expectation, as long as the fugacity $\lambda\leq\frac{\alpha}{\Delta-1}$, for any constant $\alpha<1$.  This also holds with high probability.

\subsection{Distributed Bounding Chains}
This section contains our main result in which we show how to apply CFTP and the bounding chain technique to the distributed Markov chain defined in Algorithm \ref{markov-alg}.
This yields Theorem \ref{runtimetheorem} which shows that if the weighted local CSP satisfies a general condition we can sample exactly in $O(\log n)$ rounds in the \local model.  With additional conditions we get an $O(\log n)$-round \congest algorithm.  Note that on some weighted local CSPs, such as colorings, this algorithm will run forever.  
%\sriram{Why will it run forever? Or do we just not know that it will terminate?}  \josh{Because at the beginning every vertex has every color, so no coloring proposal will ever succeed in every chain (for a connected graph).}
Furthermore, as stated earlier, terminating long running instances of the algorithm before a sample is returned may bias the results.

We first note that there is a clear choice of a random mapping representation of the chain described by Algorithm \ref{markov-alg}.  In Algorithm \ref{markov-alg}, three random choices are made: (i) each $v \in V$ is marked active with probability $p$, (ii) each active vertex $v$ picks a label $\sigma_v$ with probability proportional to $b_v(\sigma_v)$, and (iii) each constraint set $R$ passes the local filter with a certain probability (defined in Equation (\ref{eqn:localfilter})).  These random choices can be collectively specified by string $r \in \{0, 1\}^*$ chosen uniformly at random.  Since we want to execute this algorithm in the \local model, we note that $r_t$ can be generated in a distributed fashion.  Each vertex $v$ can locally pick a random bit specifying whether it is marked active and choose $\sigma_v$ at random from $L$ proportional to its $b_v$ values. To determine if the constraint set $R$ passes the local filter, we could have one vertex in $R$ (e.g., the vertex with the highest ID in $R$) pick a number uniformly at random from $[0, 1]$.  To avoid worrying about precision, we actually let this vertex generate a binary table (with arbitrary precision) specifying whether the constraint will pass or not given every possible combination of current labels and proposals.  %The way to generate this table is straightforward. \sriram{This point about the binary table is not clear to me.}  \josh{Because the precision of a node generating a real number is limited, the node with the highest ID in each constraint can generate a table that says for each possible labeling of the constraint, whether the constraint passes or not, and then share this table.  Then locally that node can use whatever precision it wants.  So in the congest model we can work with arbitrary $\lambda$ and not use large messages.  The table only has size $2^{|L|^2}$ (for binary constraints) and for the congest model we are assuming $|L|$ is constant.}

The algorithm starts with $T' = 1$ and every vertex having label set $\{l\in L: b_v(l)>0\}$. The algorithm then proceeds in \textit{stages} $1, 2, \ldots$.  After each stage, $T'$ is doubled.  At the start of a stage, some vertices have already output a label; we call these vertices \textit{coalesced} and the remaining vertices \textit{uncoalesced}.  Each stage is initiated by uncoalesced vertices. The coalesced vertices are in a ``stand by'' mode for the stage and will only become active if and when prompted by uncoalesced vertices.  At the end of a stage, any uncoalesced vertices that now have singleton label sets output the unique label in their label sets.  The algorithm is complete after every vertex outputs a label.

Each stage consists of two phases, a \textit{preprocessing} phase and a \textit{main} phase. We now describe both of these phases separately.  Recall that $k$ is the largest diameter for any constraint.

\noindent
\textbf{Preprocessing phase.} First, each uncoalesced vertex $v$ notifies every vertex $w$ in its (inclusive) $kT'$ hop neighberhood that $w$ will be active in the stage.  Each active vertex also learns its shortest path length from an uncoalesced vertex.  Next, each active vertex generates its portion of the random strings $r_{-T'},...,r_{-(T'/2)-1}$.  
These correspond to functions $f_{-(T'/2+1)}, f_{-(T'/2+2)}, \ldots, f_{-T'}$.
Note that the random strings $r_i$, $i > -(T'/2)-1$, are retained from the previous stage.

\noindent
\textbf{Main phase.} Every active vertex begins the phase by resetting its label set to $\{l\in L: b_v(l)>0\}$.  The main phase is composed of $T'$ steps $\{0,...,T'-1\}$.  

At step $i$, every active vertex $v$ with distance at most $k(T'-1-i)$ from an uncoalesced vertex collects the part of $r_{-T'+i}$ as well as the current label set from every vertex in its $k$ hop neighborhood.  Now $v$ has enough information to compute the label $f_{-T'+i}$ assigns to $v$ for every labeling of its $k$ hop neighborhood.  $v$ can now update its label set to be the union of the label it is assigned by $f_{-T'+i}$ for every \textit{possible} labeling of its $k$ hop neighborhood.  A labeling is only possible if every vertex is given a label from its label set.  

At the end of the phase, any vertices that have singleton label sets output the single label in their label set.

\subsection{Analysis}
At the end of the main phase, every vertex that was uncoalesced at the beginning of the stage has computed an upper bound on the set of labels that it could possibly be assigned by $F_{-T'}^0$.  Now consider a vertex $v$ that has a singleton label set at the end of the main phase.  This means that regardless of the input labeling, $F_{-T'}^0$ assigns $v$ a single label $l$.  Since functions are composed in a ``backwards'' order, we also know that $F_{-T''}^0$ assigns $v$ the single label $l$ for all $T''>T'$.  Once every vertex has output a label, we see that $F_{-T'}^0$ is singular.  Furthermore, the labeling output by the vertices is the unique labeling in the image of $F_{-T'}^0$.  Therefore the output labeling is exactly drawn from $\pi$, the desired sampling distribution, since $\pi$ is also the stationary distribution of the Markov chain by Theorem \ref{detailed-balance}.

The algorithm runs in $O(k \cdot T^*)$ rounds in the \local model, where $T^*$ is the smallest value such that $F_{-T^*}^0$ is singular.  When $k=1$ and the set of labels for the vertices has constant size, the algorithm runs in $O(T^*)$ rounds in the \congest model.  In theory, our algorithm requires exponential work per machine; however, for some practical examples such as the hardcore model and weighted dominating sets only polynomial work is required per machines.  This is because we can determine all of the possible labels of a vertex with a simple rule.

We now prove a theorem that shows that in some cases $T^*=O(\log n)$ with high probability.  This theorem statement is similar to Theorem 2 from \cite{BoundingChain1}; however, our method of proof is slightly different.  Following the lead from that paper (they credit some ideas to Murdoch and Green \cite{multigammacoupling}), we choose $\gamma$ to be a lower bound, over every vertex $v$, on the probability that $f_i$ assigns $v$ a single label $l\in L$ regardless of the input labeling, conditioned on $v$ being marked active.  We also choose $\beta$ to be an upper bound, over every vertex $v$, on the probability that for any two labelings $l_1,l_2$ with $l_1(v) = l_2(v)$ we have $(f_i(l_1))(v)\neq (f_i(l_2))(v)$, conditioned on $v$ being marked active.  Intuitively, $\gamma$ and $\beta$ describe the likelihood of a vertex moving towards or respectively away from a singleton label.
\begin{theorem}
\label{runtimetheorem}
The distributed bounding chain has $T^*=O(\frac{1}{p\gamma-\Delta_k p\beta}\log n)$  with high probability if $\gamma>\Delta_k \beta$, where $\Delta_i$ is the number of nodes in the largest (exclusive) $i$ hop neighborhood of the graph and $k$ is the maximum diameter of any constraint.
\end{theorem}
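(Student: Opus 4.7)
The plan is to track, for each vertex $v$, the probability $q_t(v)$ that $v$'s bounding-chain label set $S_t(v)$ is not a singleton after $t$ steps, and to show that $\max_v q_t(v)$ decays geometrically at rate $1-p(\gamma-\Delta_k\beta)$. The parameters $\gamma$ and $\beta$ give a natural one-step recurrence: $\gamma$ lower-bounds the rate at which an uncoalesced vertex ``self-coalesces'', while $\beta$ upper-bounds the rate at which disagreements at neighbors can propagate in and re-create a non-singleton label set. Once such a recurrence is established the theorem follows from iteration plus a union bound over the $n$ vertices.

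First I would argue the one-step recurrence. A vertex $v$ with $|S_{t-1}(v)|>1$ satisfies $|S_t(v)|=1$ whenever $v$ is marked active (probability $p$) \emph{and} $f_t$ assigns $v$ a single label regardless of input (conditional probability at least $\gamma$), so it fails to self-coalesce with probability at most $1-p\gamma$. A vertex $v$ with $|S_{t-1}(v)|=1$ can only have $|S_t(v)|>1$ if it is marked and some pair of labelings drawn from the product of $S_{t-1}(u)$ over $u$ in the exclusive $k$-hop neighborhood $N_k(v)$ disagree at $v$ under $f_t$. Any such pair can be connected by a sequence of single-coordinate moves, so the event ``$|S_t(v)|>1$'' implies the existence of some neighbor $u\in N_k(v)$ with $|S_{t-1}(u)|>1$ and two labelings differing only at $u$ that $f_t$ maps to different labels at $v$. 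By the definition of $\beta$ and a union bound over such $u$, this event has probability at most $p\beta\cdot|\{u\in N_k(v):|S_{t-1}(u)|>1\}|$. Combining gives
\begin{equation*}
q_t(v)\;\le\;(1-p\gamma)\,q_{t-1}(v)\;+\;p\beta\sum_{u\in N_k(v)} q_{t-1}(u).
\end{equation*}

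Setting $q_t=\max_v q_t(v)$ and using $|N_k(v)|\le\Delta_k$ collapses the recurrence to $q_t\le q_{t-1}\bigl(1-p(\gamma-\Delta_k\beta)\bigr)$. Since $\gamma>\Delta_k\beta$ by hypothesis, iterating from $q_0\le 1$ yields $q_T\le\exp\!\bigl(-T\,p(\gamma-\Delta_k\beta)\bigr)$. Choosing $T=\Theta\!\bigl(\log n/(p\gamma-\Delta_k p\beta)\bigr)$ makes $q_T\le n^{-(c+1)}$ for any desired constant $c$, and a union bound over the $n$ vertices then shows that every $S_T(v)$ is a singleton with probability at least $1-n^{-c}$, i.e., $F^{\,0}_{-T}$ is singular and thus $T^*\le T$ with high probability.

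The main obstacle is the step going from $\beta$---which only controls a \emph{single} pair of labelings at $v$---to a bound on $P(|S_t(v)|>1)$, which a priori concerns the exponentially many pairs in the product over $N_k(v)$. The single-coordinate telescoping argument sketched above is the right tool, since consecutive labelings on the path differ only at one coordinate and so fall directly under $\beta$'s definition. A secondary point to verify is that the conditioning on ``$v$ is marked'' is not double-counted between the self-coalescence and disagreement-propagation terms; this falls out automatically because if $v$ is not marked then $S_t(v)=S_{t-1}(v)$ deterministically, so every change in $v$'s label set implicitly conditions on $v$ being marked and incurs exactly one factor of $p$.
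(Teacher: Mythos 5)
Your proposal is correct and follows essentially the same route as the paper: a one-step drift argument in which each non-singleton vertex coalesces with probability at least $p\gamma$ and spawns new non-singletons at rate at most $\Delta_k p\beta$, giving geometric decay at rate $1-(p\gamma-\Delta_k p\beta)$, finished off by Markov's inequality or, equivalently, your union bound. The only differences are cosmetic (you track $\max_v \Pr[|S_t(v)|>1]$ where the paper tracks the expected count $E[Y_t]$, which is the same recurrence summed over $v$), and your single-coordinate telescoping justification for the $p\beta\,\Delta_k$ propagation term is a welcome elaboration of a step the paper asserts without detail.
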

\begin{proof}
We will use the standard trick of considering the value of $T$ needed for $F_0^T$ to be singular, instead of directly considering $T^*$.  While 'Coupling to the Future' is not a valid sampling technique, the distributions of time needed for singularity are the same for $F_0^T$ and $F_{-T}^0$.  Let $Y^v_t$ be the indicator random variable for whether vertex $v$ is always given a singleton label set by $F_0^t$.  Let $Y_t$ be the sum of all the $Y^v_t$.  Assuming that the set of possible labels for each vertex contains at least two elements, we have $Y_0 = n$.  The only way for a vertex with a singleton label set to grow in size, is if it shares a constraint with a vertex with a non-singleton label set.  We can now see
$$E[Y_{t+1}] \leq Y_t+Y_t\Delta_k p\beta-Y_t p\gamma=Y_t(1+\Delta_k p\beta-p\gamma)=Y_t(1-(p\gamma-\Delta_k p\beta)).$$
We want to have $p\gamma-\Delta_k p\beta>0$, which is equivalent to the condition $\gamma>\Delta_k \beta$. Setting $\alpha = (p\gamma-\Delta_k p\beta)$, we now have, for $0<\alpha<1$,
$$E[Y_{t+1}] = \sum_{i=0}^n E[Y_{t+1} | Y_t = i] P(Y_t = i)\leq \sum_{i=0}^n (1-\alpha)i P(Y_t = i)=(1-\alpha)E[Y_t].$$
By induction, it follows that $E[Y_t]\leq (1-\alpha)^tn$.  By Markov's inequality, for $T=O(\frac{1}{\alpha}\log n)$, it follows that $P(Y_T\geq 1)\leq \frac{1}{n}$, which completes the proof.
\end{proof}

\begin{corollary}
When $\gamma-\Delta_k \beta > r$ for some constant $r>0$, the algorithm runs in $O(k\log n)$ rounds in the \local model.  Furthermore, if $k=1$ and $L$ has finite size, the algorithm runs in $O(\log n)$ rounds in the \congest model.
\end{corollary}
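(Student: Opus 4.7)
The plan is to derive both claims as direct consequences of Theorem \ref{runtimetheorem} combined with the round-complexity accounting given right before it. First, under the hypothesis $\gamma - \Delta_k \beta > r$ for a positive constant $r$, the quantity $\alpha := p\gamma - \Delta_k p\beta$ satisfies $\alpha \geq pr$, which is itself a positive constant since $p$ is fixed. Feeding this into Theorem \ref{runtimetheorem} gives $T^* = O\!\left(\tfrac{1}{pr}\log n\right) = O(\log n)$ with high probability.

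Second, I would invoke the analysis just preceding Theorem \ref{runtimetheorem}, which observes that the algorithm takes $O(k \cdot T^*)$ rounds in the \local model. The factor $k$ comes from the fact that each simulated step $f_{-T'+i}$ of the chain requires an active vertex to gather labels and random strings from its $k$-hop neighborhood before it can update its label set (because each constraint set has diameter at most $k$). Combining with $T^* = O(\log n)$ yields the claimed $O(k\log n)$-round \local bound.

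Third, for the \congest claim with $k = 1$ and $|L| = O(1)$, I would verify that every message sent during one simulated step fits in $O(\log n)$ bits. With $k=1$, each constraint set lives on a single edge, so all information needed to evaluate a vertex's update rule is already one hop away and no multi-hop relaying is needed. A vertex's label set is a subset of the constant-size set $L$, so it is encodable in $O(1)$ bits. The random string $r_i$ decomposes per vertex into: the mark bit, the proposed label $\sigma_v \in L$, and, for each incident constraint, a constant-size truth table (generated by one designated endpoint) specifying whether the filter passes under every combination of current/proposed endpoint labels; all of these have constant size. Hence each of the $T^* = O(\log n)$ steps executes in $O(1)$ \congest rounds, giving the $O(\log n)$-round \congest bound.

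The main obstacle is not difficulty but care: one must check that when $k=1$ no information outside the closed neighborhood is ever needed in an update, and that the per-edge communication (label set, proposed label, shared filter table) really is $O(1)$ bits once $|L|$ is constant. Both follow directly from the algorithm description in the previous subsection, so the corollary is essentially a substitution into Theorem \ref{runtimetheorem} plus a short message-size check.
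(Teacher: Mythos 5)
Your proposal is correct and matches the paper's (implicit) argument: the paper gives no separate proof of this corollary, relying exactly on the $O(k\cdot T^*)$-round accounting stated just before Theorem~\ref{runtimetheorem} and the substitution $p\gamma-\Delta_k p\beta \geq pr = \Omega(1)$ into that theorem, together with the earlier remark that $k=1$ and constant $|L|$ yield constant-size messages. Your additional check that the per-step communication (mark bit, proposed label, label set, and the shared filter table for each single-edge constraint) fits in $O(1)$ bits is exactly the verification the paper leaves to the reader.
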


%\sriram{Here we should explicitly argue that a variant of the above-described algorithm works in \congest when $k = 1$ and $|L| = O(1)$ and state a theorem.}

%\josh{A slightly technical point, however in the congest model do we know that the number of bits needed to represent this random number will be small enough?  The alternative is to generate a table of all possible labelings along with a bit indicating whether the constraint passes or fails.  Since $k=1$ and $|L|$ is constant this table will have constant size and let the constraints pass with arbitrary precision.}

\section{Results for specific CSPs}
\begin{theorem}
An $O(\log n)$ round \congest algorithm exists for sampling from the hardcore model when the fugacity $\lambda\leq\frac{\alpha}{\Delta}$ for any constant $\alpha<1$.
\end{theorem}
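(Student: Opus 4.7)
My plan is to instantiate the distributed bounding chain algorithm on the hardcore model with an activation probability $p$ that is a suitable constant depending on $\alpha$, and then invoke the Corollary. Since all constraints are edges we have $k = 1$, and $L = \{0,1\}$ has constant size, so the CONGEST hypothesis of the Corollary is met. What remains is to verify $\gamma - \Delta\beta > r$ for a constant $r > 0$ when $\lambda \leq \alpha/\Delta$, and to certify that the generic (worst-case exponential) label-set updates can in fact be carried out in polynomial local time with $O(1)$-bit messages.

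For $\gamma$, conditioned on $v$ being marked active, I will exhibit an explicit event that forces $v$'s next label to be input-independent: namely that $v$ proposes $\sigma_v = 0$ and no active neighbor proposes $1$. Analyzing the local filter on each edge $\{u,v\}$ in the three cases ($u$ inactive; $u$ active with $\sigma_u = 0$; $u$ active with $\sigma_u = 1$) shows that the first two yield filter value $1$ identically, while the third is ruled out by the event; hence every filter passes and $v$ adopts $0$ regardless of the current labeling, yielding
\begin{equation*}
\gamma \;\geq\; \frac{1}{1+\lambda}\Bigl(1 - \frac{p\lambda}{1+\lambda}\Bigr)^{\!\Delta}.
\end{equation*}
For $\beta$, the key observation is that when $\sigma_v = 0$ the filter on every edge incident to $v$ reduces to a function of $\sigma_u$ and $X_v$ only (independent of $X_u$), so two labelings agreeing at $v$ always produce the same output at $v$. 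Hence the $\beta$-event is contained in $\{\sigma_v = 1\}$, giving $\beta \leq \lambda/(1+\lambda)$. Substituting $\lambda \leq \alpha/\Delta$, applying Bernoulli's inequality $(1-x)^\Delta \geq 1 - x\Delta$ to the bound on $\gamma$, and simplifying yields
\begin{equation*}
\gamma - \Delta\beta \;\geq\; \frac{1 - \alpha - p\alpha/(1+\lambda)}{1+\lambda}.
\end{equation*}
Choosing $p = (1-\alpha)/(2\alpha)$ makes the right-hand side at least the positive constant $(1-\alpha)/4$ uniformly in $\Delta$, so the Corollary delivers the $O(\log n)$-round CONGEST bound.

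Polynomial local computation and $O(1)$-bit messages follow from a short rule for updating the label set at each vertex: $v$'s new label set is determined entirely by $\sigma_v$, by whether any active neighbor has $\sigma_u = 1$, and by the two-bit label sets of its neighbors, and can be computed in $O(\Delta)$ time with $O(1)$ bits exchanged per edge. The main technical obstacle I expect is the case analysis behind the tight bound $\beta \leq \lambda/(1+\lambda)$; a slack argument here would force $\beta$ to absorb order-$1$ contributions and push the threshold well below $\alpha/\Delta$. The saving is exactly the observation above that a proposal of $\sigma_v = 0$ already insulates $v$'s output from the current labels of its neighbors.
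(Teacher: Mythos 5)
Your proof is correct and follows essentially the same route as the paper: the same choices $\beta \leq \lambda/(1+\lambda)$ and $\gamma \geq \frac{1}{1+\lambda}(1-\Delta p \frac{\lambda}{1+\lambda})$ (you state the tighter product form and then apply Bernoulli to recover exactly the paper's bound), the same substitution $\lambda \leq \alpha/\Delta$, and the same conclusion by taking $p$ a small constant; you merely supply more of the case analysis behind $\gamma$ and $\beta$ and the \congest implementation details, which the paper asserts in a sentence each. The only nit is that your explicit choice $p=(1-\alpha)/(2\alpha)$ exceeds $1$ when $\alpha<1/3$, so you should take $p=\min\{1,(1-\alpha)/(2\alpha)\}$ (shrinking $p$ only improves the bound), which is a cosmetic fix.
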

\begin{proof}Note that a choice of $\gamma$ is $\frac{1}{1+\lambda}\big(1-\Delta p\frac{\lambda}{1+\lambda}\big)$, since a vertex will always accept a proposal to leave the independent set if no neighboring vertex is marked and proposing to join the independent set.  A choice of $\beta$ is $\frac{\lambda}{1+\lambda}$, since the label set of a vertex can only grow if the vertex is proposing to join the independent set.  Finally, $\Delta_k=\Delta$.  Thus we need $\gamma-\Delta \beta > r > 0$, for some constant $r$.

Note that $$\frac{1}{1+\lambda}\big(1-\Delta p\frac{\lambda}{1+\lambda}\big)-\Delta\frac{\lambda}{1+\lambda}\geq \frac{1}{1+\lambda}\big(1- p\frac{\alpha}{1+\lambda}\big)-\frac{\alpha}{1+\lambda}=$$
$$\frac{1-\alpha}{1+\lambda}-\frac{p\alpha}{(1+\lambda)^2}\geq \frac{1-\alpha-p\alpha}{1+\lambda}\geq \frac{1-\alpha(1+p)}{1+\alpha}$$
Choosing $p$ small enough we are done.
\end{proof}

\begin{theorem}
\label{WDS}
An $O(\log n)$ \congest algorithm exists for sampling weighted dominating sets when $\lambda \geq \alpha\Delta^2$ for any constant $\alpha>1$.
\end{theorem}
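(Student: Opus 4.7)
The plan is to follow the template of the hardcore theorem above by choosing $\gamma$ and $\beta$ for the dominating set chain and then invoking the corollary to Theorem~\ref{runtimetheorem}. Two differences from the hardcore case need attention: the ``distinguished'' label is now $1$ (since large $\lambda$ concentrates mass on large sets) and the constraint sets are the inclusive neighborhoods $N^+(u)$, so the constraint diameter is $k=2$ and $\Delta_k\le\Delta^2$.

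For $\gamma$ I would consider the event that $v$ is marked, proposes $1$, and every marked vertex within graph distance $2$ of $v$ (other than $v$ itself) also proposes $1$. A short case analysis of the filter $C_{N^+(u)}$ for $u\in N^+(v)$ shows that under this event every potential labeling of $N^+(u)$ contains at least one label forced to $1$ (namely one coming from a proposal of a marked vertex in $N^+(u)$, which by assumption equals $1$), so every such filter deterministically passes and $v$'s new label is forced to $1$ regardless of input. Using Bernoulli's inequality and that the exclusive $2$-hop neighborhood of $v$ has at most $\Delta^2$ vertices,
\[
\gamma \ge \frac{\lambda}{1+\lambda}\Big(1-\frac{p}{1+\lambda}\Big)^{\Delta^2} \ge \frac{\lambda}{1+\lambda}\Big(1-\frac{\Delta^2 p}{1+\lambda}\Big).
\]
For $\beta$ I would split on $X_v$. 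When $X_v=1$, the new label can differ across inputs only if $\sigma_v=0$, giving $\beta_{X_v=1}\le\frac{1}{1+\lambda}$. When $X_v=0$, the new label differs only if $\sigma_v=1$ and, in addition, some filter is capable of failing for some input; a short dual argument shows the latter requires some vertex in the exclusive $2$-hop of $v$ to be marked with proposal $0$, so a union bound yields $\beta_{X_v=0}\le\frac{\Delta^2 p\lambda}{(1+\lambda)^2}$. Combining, $\Delta_k\beta\le\frac{\Delta^2}{1+\lambda}+\frac{\Delta^4 p\lambda}{(1+\lambda)^2}$.

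Substituting the hypothesis $\lambda\ge\alpha\Delta^2$ with constant $\alpha>1$, both terms of $\Delta_k\beta$ become $O(1/\alpha)$ plus a $p$-dependent piece, while $\gamma\ge 1-O(1/\alpha)-O(p)$. Choosing $p$ to be a sufficiently small constant (depending on $\alpha$) then gives $\gamma-\Delta_k\beta\ge r$ for some positive constant $r$, and the corollary to Theorem~\ref{runtimetheorem} yields $T^*=O(\log n)$ with high probability. For the \congest claim I would observe that the label alphabet has size $2$ and each filter $C_{N^+(u)}$ can be evaluated by $u$ from its $1$-hop information, so each step of the distributed CFTP procedure can be performed in $O(1)$ \congest rounds, giving $O(\log n)$ total rounds. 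The main obstacle is the tighter upper bound on $\beta_{X_v=0}$: unlike the hardcore case where one ``direction'' of the filter depends only on proposals and not on current labels, both directions of the dominating set filter are input-dependent, so one must extract an extra factor of $p/(1+\lambda)$ from a union bound over the exclusive $2$-hop neighborhood of $v$, which is essential to beat the $\Delta^2$ factor in $\Delta_k$.
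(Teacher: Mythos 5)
Your overall strategy is the same as the paper's: pick $\gamma$ and $\beta$ for the dominating-set chain, note $k=2$ and $\Delta_k=\Delta^2$, and invoke the corollary to Theorem~\ref{runtimetheorem}. Your $\gamma$ is exactly the paper's choice $\frac{\lambda}{1+\lambda}\bigl(1-\frac{\Delta^2 p}{1+\lambda}\bigr)$, and your observation that the $X_v=0,\ \sigma_v=1$ case is genuinely input-dependent (because an all-zero potential labeling needs a $\sigma$-pick from some \emph{other} marked vertex proposing $0$) is a real feature of the dominating-set filter; the paper simply takes $\beta=\frac{1}{1+\lambda}$ ``similar to the hardcore model,'' which only accounts for the $\sigma_v=0$ mode.

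However, your proof has a genuine gap in the final arithmetic. Under $\lambda\ge\alpha\Delta^2$ your extra term satisfies
\[
\Delta_k\cdot\beta_{X_v=0}\;\le\;\frac{\Delta^4 p\lambda}{(1+\lambda)^2}\;\le\;\frac{\Delta^4 p}{1+\lambda}\;\le\;\frac{p\,\Delta^2}{\alpha},
\]
which is \emph{not} ``$O(1/\alpha)$ plus a $p$-dependent piece'' — it carries a factor of $\Delta^2$ that does not cancel. To make $\gamma-\Delta_k\beta$ a positive constant you would be forced to take $p=O(1/\Delta^2)$, and then Theorem~\ref{runtimetheorem} only yields $T^*=O\bigl(\frac{1}{p\,r}\log n\bigr)=O(\Delta^2\log n)$ rounds, not $O(\log n)$. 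So as written your argument does not establish the stated round bound; you would either need to show that the $\sigma_v=1$ divergence mode can be absorbed (e.g., by a sharper accounting that charges it differently than a union bound over the $\Delta^2$-size two-hop neighborhood), or accept the weaker $O(\Delta^2\log n)$ bound. Note that the paper's own proof avoids this term only by asserting $\beta=\frac{1}{1+\lambda}$ without addressing the case you correctly identified, so your refinement exposes a step that the paper leaves unjustified rather than resolving it.
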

\begin{proof}
While there are now constraints that are not unary or binary, we are still able to run the algorithm in the \congest model.  This is because necessary information can be aggregated at the vertex at the center of each constraint and then distributed to the other vertices of the constraint.

Similar to the hardcore model we can choose $\gamma = \frac{\lambda}{1+\lambda}\big(1-\Delta_k p\frac{1}{1+\lambda}\big)$ and $\beta=\frac{1}{1+\lambda}$.  Note that here $\Delta_k = \Delta+\Delta(\Delta-1)=\Delta^2$.  Again it is sufficient for $\gamma-\Delta_k \beta > r > 0$, for some constant $r$.

$$\frac{\lambda}{1+\lambda}\big(1-\Delta_k p\frac{1}{1+\lambda}\big)-\Delta_k \frac{1}{1+\lambda}\geq \frac{\alpha\Delta_k}{1+\alpha\Delta_k}\big(1-\frac{\Delta_k p}{1+\alpha\Delta_k}\big)-\frac{\Delta_k}{1+\alpha\Delta_k}=$$
$$\frac{\Delta_k}{1+\alpha\Delta_k}\bigg(\alpha\big(\frac{1+\Delta_k(\alpha-p)}{1+\alpha\Delta_k}\big)-1\bigg)\geq \frac{1}{1+\alpha}\bigg(\alpha\big(\frac{\alpha-p}{\alpha}\big)-1\bigg)=\frac{a-p-1}{1+a}$$
We are done as long as $p$ is sufficiently small.
\end{proof}

\begin{remark}The simplified Ising model with parameter $\beta>1$ remains monotone in the distributed setting; see \cite{CFTP} for a more detailed explanation in the sequential setting. Our algorithm can be modified to give samples from the Ising model.  Instead of keeping track of potential labels for each vertex, each vertex keeps track of its label when the input of $F_{-T}$ is $\top$ and $\bot$.  When they are equal the vertex outputs this label.  We don't have a bound on the runtime of this approach; however, combining the analysis of monotone CFTP \cite{CFTP} and the remark of \cite{FischerGhaffariColoring}\footnote{They mention that the Dobrushin condition \cite{dob} is enough for fast mixing of their chain, however we cannot guarantee that we are using the generalization they mention and also they don't provide a proof of their claim.} may imply an efficient, $O(\log^2 n)$ round, \congest algorithm for some range of $\beta$.
\end{remark}

\section{Conclusion}
We have shown that for certain weighted local CSPs, CFTP combined with the bounding chain technique allows efficient exact sampling in the \local and sometimes even the \congest model. 
Previous work \cite{DistributedJVVPaper,LLLPaper} that achieved results for exact distributed sampling used very different techniques, so a conceptual contribution of our paper is showing that CFTP and bounding chains should also be added to our toolkit for exact distributed sampling.  We wish to highlight two open questions suggested by this work.  

Does there exist a logarithmic-round \congest or \local algorithm that exactly samples uniform colorings from a palette of size $O(\Delta)$?  
%\cite{DistributedJVVPaper} gives an algorithm for uniformly sampling colorings, at least on triangle-free graphs, requiring $O(\log^3 n)$ \local rounds \josh{I don't follow this result fully, since the paper they cite gives a stronger condition for strong spatial mixing}.  
In the sequential setting, one of the original papers describing bounding chains showed that there is a polynomial-time exact coloring algorithm when the number of colors is $\Theta(\Delta^2)$ \cite{BoundingChain2}. Very recent work on bounding chains in the sequential setting has shown that uniform exact sampling of colorings is possible in polynomial time using only $\Theta(\Delta)$ colors \cite{BhandariChakrabortyColorBoundingChain,NewestColorBoundingChain}.

Compared to the \congest model, many problems can be solved much faster in ``all-to-all'' communication models such as the \congestedclique model \cite{lotker2005mstJournal,ghaffari16_mst_log_star_round_conges_clique,HegemanPPSSPODC15,jurdzinski18_mst_o_round_conges_clique}. A question suggested by recent work on distributed sampling is whether much faster distributed sampling algorithms -- exact or approximate -- can be designed for the \congestedclique model.
%
% ---- Bibliography ----
%
% BibTeX users should specify bibliography style 'splncs04'.
% References will then be sorted and formatted in the correct style.
%
% \bibliographystyle{splncs04}
% \bibliography{mybibliography}
%
\bibliographystyle{splncs04}
\bibliography{sample-base}
\end{document}